\newtheorem{theorem}{Theorem}[section]
\newtheorem{definition}[theorem]{Definition}
\theoremstyle{definition}
\newcommand{\be}{\begin{equation}}
\newcommand{\ee}{\end{equation}}
\newcommand{\Rt}{\mathbb{R}^3}
\newcommand{\RA}{\mathcal{R}_A}
\newcommand{\RC}{\mathcal{R}_C}
\newcommand{\Ho}{\mathcal{H}}
\newcommand{\R}{\mathcal{R}}
\title{ Penrose-like inequality with angular momentum for minimal surfaces.}
\author{Pablo Anglada \footnote{panglada@famaf.unc.edu.ar}\\
  Facultad de Matem\'atica, Astronom\'{i}a, F\'{i}sica y Computaci\'on \\
     Universidad Nacional de C\'ordoba, \\
Instituto de F\'{i}sica Enrique Gaviola, IFEG, CONICET,\\
  Ciudad Universitaria (5000) C\'ordoba, Argentina.}
\begin{document}
\maketitle
\begin{abstract}
In axially symmetric space-times the Penrose inequality can be strengthened to include angular momentum. 
We prove a version of this inequality for minimal surfaces, more precisely, a lower bound for the  ADM mass in terms of the area of a minimal surface, the angular momentum and a particular measure of the surface size. We consider an axially symmetric and asymptotically flat initial data, and use the monotonicity of the Geroch quasi-local energy on 2-surfaces along the inverse mean curvature flow. 
\end{abstract}

\section{Introduction}
 
In 1973 Penrose \cite{Penrose1973} proposed, using a heuristic argument, that the mass $m$ of a black hole must satisfy the relation 
\be\label{penroseineq}
m\geq \sqrt{\frac{A}{16\pi}}
\ee
where $A$ is the area of the black hole horizon. Since the original proposal by Penrose this topic have become an active area of research, and several versions of the problem have been studied, see the review articles \cite{Mars2009, Bray2003}, as well as the general approaches to it  \cite{Frauendiener:01, BrayKhuri:11, BrayKhuri:09}. Moreover under some conditions, one can strengthen the Penrose heuristic argument to include charge and angular momentum (see \cite{Dain:2012} for more details).

The Penrose argument goes as follows: assume we have collapsing matter, from the \textit{cosmic censorship conjecture} the end state of the collapse must be a black hole in equilibrium. In this situation it is expected that the matter fields lies within the event horizon and hence, according to the black hole uniqueness result \cite{Heusler:1996}, the end state is a Kerr-Newman black hole. Then the area $A_f$ of the final state black hole is given in terms of the total mass $m_f$, the angular momentum $J_f$ and the electric charge $Q_f$ of the black hole as:
\be \label{Afmjq}
A_f=4 \pi \left(2 m_f^2 - Q_f^2 +2 m_f \sqrt{m_f^2- \frac{J_f^2}{m_f^2} - Q_f^2} \right).
\ee
Take a Cauchy surface $M$ such that the collapse has already occurred, let $\varSigma$ denote the intersection of the event horizon with the Cauchy surface, and $A_{\varSigma}$ be the area of $\varSigma$. In $M$ we compute the total mass $m$, the charge $Q$, and the angular momentum $J$ at space-like infinity. From the black hole area theorem \cite{Hayward:1994} we have that the area increases with time, $A_f\geq A$, and since gravitational waves carry positive energy, the total mass measured on $M$ must be bigger than the mass of the final state black hole, $m \geq m_f$ .
Then using that the area of the final state black hole \eqref{Afmjq} is a monotonically increasing function of $m_f$ we have:
\be \label{Amjq}
A_{\varSigma} \leq A_f \leq 4 \pi \left(2 m^2 - Q_f^2 +2 m \sqrt{m^2- \frac{J_f^2}{m^2} - Q_f^2} \right).
\ee

Good progress has been made in considering the implications of \eqref{Amjq} in the
case of a charged black hole without angular momentum. In this case
different versions of an inequality relating the ADM mass, the area of the
horizon and the electric charge have been studied
\cite{Weinstein:2005,Disconzi:2012, KhuriYamada:2013, Khuri:2013, Khuri:2014,
Khuri:2015}. This results are an important test of the weak cosmic censorship
conjecture in the case of a collapsing charge. In this work we are concerned
with the case with angular momentum and, for simplicity, we do not consider
charged initial data. This case represents a very relevant physical scenario
since in general all collapsing matter is expected to have angular momentum.
Thus a generalization of the Penrose inequality to include angular momentum is
another important test that supports the cosmic censorship conjecture.

When we consider rotating collapsing matter, the initial and final values of the angular momentum does not necessarily coincide, since gravitational waves can carry angular momentum. Hence there is no simple way to relate the value of angular momentum of the initial collapsing state with the final black hole. To avoid this problem one must consider a situation in which the angular momentum is conserved along the evolution, and this is just the case we have when we assume the spacetime  is \textit{axially symmetric}. Is important to note that the axially symmetric condition is necessary to have a good definition of angular momentum in the general relativity context, we do not know how to calculate the angular momentum with out it. Thus this condition is necessary not only to have a conversed angular momentum but also to define the angular momentum it self.

Then considering axially symmetric black holes generated by an axially symmetric collapse, we have:
\be \label{mvsj}
A_{\varSigma} \leq 8 \pi \left( m^2 + m \sqrt{m^2- \frac{J^2}{m^2}} \right)
\ee

Now, we would like to bound the area $A_{\varSigma}$ in terms of geometrical quantities that can be calculated from the initial conditions. The problem is that the standard mathematical definition of black holes does not allows us to do this. The  existence of an event horizon is a global property of the casual structure of the entire spacetime, and we can not identify a black hole region without that global knowledge. Instead, the concept of trapped surfaces provides a local characterization of a black hole, and moreover they imply the existence of spacetime singularities and  (with the appropriate causal structure) the existence of event horizons (see \cite{Booth2005} for a full discussion of this topic). The boundary of regions of trapped surfaces are horizons, in particular we are going to consider future apparent horizons. A future apparent horizon $\Ho$ is a  2-surface defined by the property that all outgoing future directed null geodesics orthogonal to $\Ho$ have expansion $\Theta^+=0 $  and the 
expansion of the outgoing past null geodesics is non-negative $\Theta^-\geq 0$.

Then considering a cut $S$ of the outer trapped boundary $\Ho$ with a Cauchy surface $M$ in an axially symmetric space time, the Penrose heuristic  argument implies  (\cite{Dain:2002}, \cite{Hawking:1972}) :
\be \label{Avsmj}
A \leq 8 \pi m \left( m + \sqrt{m^2 - \frac{J^2}{m^2}} \right) 
\ee
where $A$ is the area of $S$.
This inequality only makes sense if the data satisfies $m \geq \sqrt{|J|}$. This condition has been proved for a connected outer trapped surface in \cite{Dain:2008}. Moreover in 2011 Dain and Reiris \cite{Dain:2011pi} proved that, under certain conditions, the area and angular momentum of a axially symmetric closed apparent horizon satisfies the local inequality $A \geq 8 \pi|J|$. This result has also been extended to other cases in \cite{Acena:2010ws, Clement:2012vb, Clement:2015fqa}.
Thus inequality \eqref{Avsmj} is equivalent to (see \cite{Dain:2014Geo}, \cite{Mars2009}):
\be \label{penJ}
m^2 \geq \frac{A }{16 \pi} + \frac{4 \pi J^2}{A}
\ee
This version of the Penrose inequality admits a rigidity case which states that the equality can only occur for the Kerr black hole.

In this paper we study the validity of inequality \eqref{penJ} and the possibility of proving some weak version of it. This is one of the open problems in the geometrical 
inequalities area (see \cite{Dain:2014Geo,Mars2009}), and not much progress has been achieved in obtaining a formal proof of this inequality.

Inequality \eqref{penJ} establishes a relation between the mass measured at space-like infinity and quasi-local quantities of an apparent horizon, its area and angular momentum. Then, to address the problem one has to select a hypersurface $M$ of the spacetime to connect the horizon with infinity \cite{Malec:2002ki}. Hence we have to study an \textit{initial data} in accordance with the spacetime, and calculate the physical parameters involved in \eqref{penJ} in terms of this initial data, note that in particular the total mass will be the ADM mass $m_{ADM}$ \cite{ADM}.
An initial data set $(M,\bar g, K; \mu ,j^i)$  is given by a 3-manifold $M$ with positive definite metric $\bar g$ and extrinsic curvature $K$, together with an energy density  $\mu$ and a matter current $j^i$. This set must satisfy the constraint equations 
\begin{align}
 \label{const1}
   \bar D_j   K^{i j} -  \bar D^i   k= -8\pi j^i,\\
 \label{const2}
   \bar R -  K_{i j}   K^{i j}+  k^2=16\pi \mu,
\end{align}
where ${\bar D}$ and $\bar R$ are the Levi-Civita connection and the curvature scalar associated with $  \bar g$, and $k=\mbox{tr}K$. 
We assume the data is \textit{asymptotically flat} and that the matter fields satisfy the \textit{Dominant Energy Condition} (DEC) , $\mu \geq |j|$,  moreover we suppose $M$ has a boundary given by a 2-surface $\partial M$ which is a future apparent horizon. These are probably the most general assumptions one needs to make to be able to treat the problem.

In this work we focus on the case where the apparent horizon is a minimal surface, $\Theta^+=\Theta^-=0$. This particular assumption for the apparent horizon is usually associated with the special case in which the initial data is time symmetric, usually known as the Riemannian case. This terminology is not precise enough (\cite{Mars2009} \cite{Bray2003}): in the case where the apparent horizon is a minimal surface, the time symmetric condition is not essential. The inequality \eqref{penroseineq} holds as long as the scalar curvature is non-negative \cite{Huisken1997} \cite{Huisken2001}, and this condition can also be satisfied taking weaker assumptions, for example considering only maximal initial data sets. Note that in the time symmetric case the angular momentum of the initial data is zero, thus we need some weaker conditions if we want to include the effects of rotation on the Pensore inequality. 

Following \cite{Anglada2017} we are going to use a certain functional proposed by Geroch \cite{Geroch1973}, the Geroch energy $E_G$, which is monotonic under a  smooth \textit{inverse mean curvature flow} (IMCF). 
This functional has the interesting property that tends to the ADM Mass of $M$ at infinity and is equal to $\sqrt{\frac{A}{16 \pi}}$ for a minimal surface. This argument was used by Huisken and Ilmanem in their important work \cite{Huisken2001}, where they removed the smoothness assumption of the IMCF using a weak formulation. See \cite{Huiskenevol} \cite{Szabados04} for a review of the basic properties of the inverse mean curvature flow (IMCF) and the Geroch energy. A solution of the IMCF is a smooth family of hypersurfaces  $S_t:=x(S,t)$ on $M$, with $x:S\times[0,\tau]\to M$ satisfying the evolution equation
\be \label{eqIMCF}
\frac{\partial x}{\partial t}=\frac{\nu}{H}
\ee
where $t\in[0,\tau]$, $H>0$ is the mean curvature of the 2-surface $S_t$ at $x$ and $\nu$ is the outward unit normal to $S_t$.
The Geroch energy is defined for each surface of the flow as follows:
\be \label{Gmass}
E_G(S_t):=\frac{A_t^{1/2}}{(16\pi)^{3/2}}\left(16\pi-\int_{S_t}H^2dS\right)
\ee
where $A_t$ and $dS$ are the area and the area element of $S_t$ respectively, and the time derivate of $E_G(S_t)$ along the flow satisfies
\be\label{evolE}
\frac{d}{dt}E_G \geq \frac{A_t^{1/2}}{(16\pi)^{3/2}}\int_{S_t} \bar R dS.
\ee
Using these tools we prove an inequality in the spirit of \eqref{penJ}, that is to say, a lower bound for the total mass of the initial data in terms of the area of the minimal surface and a quotient between the angular momentum and certain measures of size. For this problem the monotonicity of the Geroch energy is not enough to include the rotational contribution, we need to relate the behavior of the energy along the IMCF with the angular momentum of the initial data. We carry out this by relating the surface integral of the scalar curvature in \eqref{evolE} with the angular momentum and a particular measure of size that depends on the behavior of the axial Killing vector along the IMCF \cite{Anglada2017} .

\section{Background}

Following \cite{Malec:2002ki} we consider an asymptotically flat and axially symmetric initial data with boundary $(M, \partial M,\bar g, K; \mu, j^i)$, such that the boundary $\partial M$ is connected and compact, and the matter fields satisfy the DEC. We assume $\partial M$ is a minimal surface, and that there are no other trapped surfaces on $M$. With these assumptions $M$ is an \textit{exterior region}, has the topology $\Rt$ minus a ball and its boundary is an area-minimizing 2-surface \cite{Huisken2001}. Assume there exists a smooth inverse mean curvature flow (IMCF) of surfaces $S_t$ starting from $S_{t_0}=\partial M$ and having spherical topology. With these assumptions one can write the metric $\bar g$ in the form:
\be\label{barg}
ds^2_{\bar g}=\frac{dt^2}{H^2} + g_{ij}dx^i dx^j
\ee
where $g_{ij}$ and $(x^1, x^2)$ are the induced metric and the coordinates on $S_t$ respectively.
Note that when considering the IMCF in axially symmetric initial data, 
the IMCF equation \eqref{eqIMCF} preserves axial symmetry. That is, if one starts the flow with an axially 
symmetric initial surface, there is no mechanism that could make the normal to each subsequent surface to have a component along the axial Killing vector field $\eta^i$ associated to the axial symmetry. Due to this 
observation, from now on, when we discuss the IMCF flow, we always consider it consisting of axially symmetric surfaces $S_t$. Then for each surface of the flow we can we define orthogonal coordinates $\theta, \varphi$ such that $\eta^i=\frac{\partial}{\partial \varphi}^i$. One can always choose this for axially symmetric 2-surfaces that are diffeomorphic to $S^2$ , see for example
\cite{Dain:2011pi}. Hence we have:
\be \label{gSt}
  ds^2_g=\Psi^4 d\theta^2+ \eta d\varphi^2
\ee
where $\eta=g_{ij}\eta^i \eta^j$ is the norm of the axial Killing vector.
Moreover in this setting the extrinsic curvature can be decomposed \cite{Malec:2002ki}:
\be\label{K}
K_{ij}=z\nu_i \nu_j + \nu_i s_j + s_i \nu_j  + g^k_i g^l_j \chi_{lk} + \frac{q}{2}g_{ij}
\ee
where q is the trace with respect to $g_{ij}$ of $K$, $q=K_{ij}g^{ij}$ and 
\be\label{Kparts}
z=K_{ij}\nu^i \nu^j \qquad s_i=g^j_i K_{jl} \nu^l \qquad \chi_{ij}=g^l_i g^n_j K_{ln} - \frac{q}{2} g_{ij}
\ee
then the trace of the extrinsic curvature takes the form $k=\mbox{tr}(K)=z+q$ and its norm is
\be\label{KK}
K_{ij}K^{ij}=z^2+2s_i s^i + \chi_{ij} \chi^{ij} + \frac{q^2}{2}.
\ee
In this context the null expansions $\Theta^+, \Theta^-$ of $S_t$ are given by $ \Theta^+|_{S_t}=H+q$ and  $\Theta^-|_{S_t}=H-q$. Then if $M$ has no other trapped surface than $\partial M$, the expansions satisfy 
\be\label{NTS}
(\Theta^+\Theta^-)|_{S_t}=H^2-q^2>0 \quad \forall t \neq t_0,
\ee
Note that in particular this implies that there are no minimal surfaces on $M$, except for $\partial M$, and this is a necessary condition to have a smooth IMCF \cite{Huisken2001}.

Now in order to have a non-negativity scalar curvature we need to assume some special conditions for the extrinsic curvature. The fist possible choice, the most usual, is to consider that the initial data is maximal, $k=0 $. Note that in this case, the minimal surface $\partial M$ is an apparent horizon only if the extrinsic curvature also satisfies $q|_{\partial M}=z|_{\partial M}=0$.
Another possible condition to have a non-negative scalar curvature is to take $K$ such that for every surface of the flow $q=0$, this assures that $\partial M$ is an apparent horizon. The non-negative of $\bar R$ can be seen using the previous decomposition of $K$, from equation \eqref{const2} we have:
\begin{equation}
\label{barR}
\bar  R= 16\pi \mu + 2s_i s^i + \chi_{ij} \chi^{ij} -  \frac{q}{2}(q + 4z) 
\end{equation}
and thus if $q=0$, $\bar R$ is non-negative (we have assumed the DEC).

The physical and geometrical quantities we are interested in are the ADM mass $m_{ADM}$ and the Komar angular momentum $J(S_t)$: 
\be\label{angmom}
 J(S_t)=\frac{1}{8\pi}\int_{S_t} K_{ij} \eta^i \nu^jdS,
\ee
where we use that $\bar g_{ij} \nu^i \eta^j=0$. We will also consider the areal and circumferential radii of a surface $S_t$ in $M$:
\be\label{size}
\RA(S_t):=\sqrt{\frac{A_t}{4\pi}},\qquad \RC(S_t):=\frac{\mathcal C (S_t)}{2\pi}
\ee
$A_t$ is the area of $S_t$ and $\mathcal C(S_t)$ is the length of the greatest axially symmetric circle in $S_t$.

\section{Main Result}

Together with the usual definition of size \eqref{size}, we define another measure of size of a surface $S_t$ based on the behavior of the norm of the Killing vector along the IMCF from $S_t$ to infinity :
\begin{definition}{ $\R(S_t)$}
\be\label{R}
\frac{1}{\R(S_t)^2} := A_t^{1/2} \int_{t}^\infty \frac{A_{t'}^{1/2}}{\int_{S_{t'}}  \eta dS }dt'
\ee
\end{definition}
From the asymptotic behavior of the flow one can see that the integral in \eqref{R} is convergent provided the flow remains smooth, thus $\R$ is positive and well defined. Moreover, depending on the properties of the flow this measure can in some cases be related with the usual measures of size of a surface. 
To study the properties of $\R$ we start with the evolution equation for the norm of the Killing vector $\eta$ along the flow:
\be\label{evolg}
\frac{\partial}{\partial t} \eta=\frac{2 \eta \lambda_\varphi}{H}
\ee
where we define $\lambda_\theta, \lambda_\varphi $ as the principal curvatures of $S_t$, and in particular $\lambda_\varphi$ is the principal curvature in the direction of the Killing vector. This equation comes from the evolution equation of the metric $g$ (see \cite{Anglada2017}, \cite{Huiskenevol} for more details) and is valid only for an axially symmetric IMCF. Then the behavior of the integrand in \eqref{R} is given by:
\be\label{Rdt}
\frac{d}{dt} \left( \frac{A_t^{1/2}}{\int_{S_t} \eta dS}  \right)= - \frac{5}{2} \frac{A_t^{1/2}}{\int_{S_t}\eta dS }  + \frac{A_t^{1/2}}{ \left( \int_{S_t}\eta dS \right)^2 } \int_{S_t}\frac{2 \eta \lambda_\theta}{H}dS
\ee
where we have used that the area element $dS$ of 
$S_t$ satisfies $\frac{\partial}{\partial t}(dS)= dS$ and that $H=\lambda_\theta + \lambda_\varphi$.

Now  depending on the behavior of the flow we can estimate the value of $\R$. 
The most favourable situation, and by far the most frequent one, is to have a spherical IMCF. In this case the principal curvatures are $\lambda_\theta = \lambda_\varphi = \frac{H}{2}$, then
\be
\frac{d}{dt} \left(\frac{A_t^{1/2}}{\int_{S_t} \eta dS}  \right)= - \frac{3}{2} \frac{A_t^{1/2}}{\int_{S_t} \eta dS}
\ee
and hence we have an exact expression for $\R$:
\be
\R^2(S_t)=\frac{3}{2}\frac{\int_{S_t} \eta dS}{A_t} = \RA^2(S_t)
\ee
where the last equality comes from the fact that for spherical surfaces one can write $\eta=\RA^2 \sin^2(\theta)$, and then $\int_{S_t} \eta dS= \frac{2}{3} 4\pi \RA^4 (S_t) $.
For weaker conditions on the behavior of the flow we can not obtain an exact expression for $\R$ in terms of the usual measures, but we do obtain a bound for it. Let's assume that we have a convex flow, then $\lambda_\theta , \lambda_\varphi >0$ and hence:
\be
\frac{d}{dt} \left( \frac{A_t^{1/2}}{\int_{S_t} \eta dS}  \right) \geq - \frac{5}{2} \frac{A_t^{1/2}}{\int_{S_t}\eta dS }  
\ee
thus for this case:
\be
\R^2(S_t) \leq \frac{5}{2}\frac{\int_{S_t} \eta dS}{A_t} \leq \frac{5}{2} \RC^2(S_t).
\ee
Moreover we can also get a similar relation between $\R$ and $\RC$ assuming an even weaker condition: that the surfaces of the flow are not far from being convex. In particular assuming that $h_{ij}\geq -\frac{H}{2} g_{ij}$ we have
\be
\R^2(S_t) \leq \frac{7}{2}\frac{\int_{S_t} \eta dS}{A_t} \leq \frac{7}{2} \RC^2(S_t).
\ee
It is important to note that from the asymptotic behavior of the flow there must exist some first surface such that the flow satisfies these properties from then on, even without assuming any especial property for the IMCF.  Moreover 
the weaker the condition we use, the closer the surface to $\partial M$.

Using this definition of size, and the previous tools we prove the following theorem.


\begin{theorem}
\label{theo1}
Let $(M,\partial M, \bar g, K)$
be a vacuum, asymptotically flat, and axially symmetric initial data, such that $\partial M$ is a compact and connected minimal surface.
Suppose the data has no other trapped surfaces and assume there exists a smooth IMCF of surfaces $S_t$ starting from $\partial M$ and having spherical topology.
Then if the data satisfies either:
\begin{enumerate}[a)]
        \item \label{a} the initial data is maximal: $k=0$, or 
        \item \label{b} for each surface $S_t$ the trace of $K$ with respect to $g_{ij}$ satisfies: $q=0$,
\end{enumerate}
then:
\be\label{mainineq}
m_{ADM}\geq \frac{\RA}{2} + \frac{ J^2}{\RA \R^2}
\ee
where $J$ and $\RA$ are the angular momentum and the areal radii of $\partial M$ respectively, and $\R=\R(\partial M)$ is defined by \eqref{R}.
\end{theorem}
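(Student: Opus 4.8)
The plan is to integrate the Geroch‑energy monotonicity \eqref{evolE} from $\partial M$ out to spatial infinity, to bound the scalar‑curvature integrand $\int_{S_t}\bar R\,dS$ from below by an angular‑momentum term, and then to recognize the resulting $t$‑integral as $1/(A_{t_0}^{1/2}\R^2)$ by the very definition \eqref{R}. I would begin by recording the two endpoint values of $E_G$: since $\partial M$ is minimal, $H\equiv 0$ on it and \eqref{Gmass} gives $E_G(\partial M)=\sqrt{A_{t_0}/16\pi}=\RA/2$; and because the data is asymptotically flat and the IMCF is smooth, $E_G(S_t)\to m_{ADM}$ as $t\to\infty$. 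Next I would estimate $\bar R$. From the decomposition \eqref{K}–\eqref{KK} the vacuum version of \eqref{barR} reads $\bar R=2s_is^i+\chi_{ij}\chi^{ij}-\tfrac q2(q+4z)$; in case \ref{b} the last term vanishes since $q=0$, while in case \ref{a} one has $k=z+q=0$, so it equals $\tfrac32 q^2\ge 0$. Hence in both cases $\bar R\ge 2s_is^i$.

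The angular momentum enters through a pointwise identity: contracting \eqref{K} with $\eta^i\nu^j$ and using $\eta\cdot\nu=0$ kills every term except $s_i\eta^i$, so \eqref{angmom} becomes $J(S_t)=\tfrac1{8\pi}\int_{S_t}s_i\eta^i\,dS$. The pointwise Cauchy–Schwarz inequality $(s_i\eta^i)^2\le (s_is^i)\,\eta$ (recall $g_{ij}\eta^i\eta^j=\eta$) together with the integral Cauchy–Schwarz inequality then yields
\[
(8\pi J(S_t))^2\le\Big(\int_{S_t}s_is^i\,dS\Big)\Big(\int_{S_t}\eta\,dS\Big)\le\frac12\Big(\int_{S_t}\bar R\,dS\Big)\Big(\int_{S_t}\eta\,dS\Big),
\]
i.e. $\int_{S_t}\bar R\,dS\ge 128\pi^2\,J(S_t)^2\big/\!\int_{S_t}\eta\,dS$. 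The second ingredient is conservation of the Komar angular momentum along the flow: for two flow surfaces bounding a region $\Omega$, the divergence theorem identifies the difference of the integrals in \eqref{angmom} with $\int_\Omega\bar D_j(K^{ij}\eta_i)\,dV$; the term $K^{ij}\bar D_j\eta_i=K^{ij}\bar D_{(j}\eta_{i)}$ vanishes because $\eta$ is a Killing field of $\bar g$, and in vacuum \eqref{const1} gives $\bar D_jK^{ij}=\bar D^ik$, whose contraction with $\eta_i$ is a total divergence whose flux vanishes since $k$ is axially symmetric and $\eta\perp\nu$. Thus $J(S_t)\equiv J(\partial M)=J$.

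Combining these with \eqref{evolE} and integrating from $t_0$ to $\infty$,
\[
m_{ADM}-\frac{\RA}{2}\ \ge\ \frac{128\pi^2 J^2}{(16\pi)^{3/2}}\int_{t_0}^{\infty}\frac{A_t^{1/2}}{\int_{S_t}\eta\,dS}\,dt\ =\ 2\sqrt\pi\,J^2\cdot\frac{1}{A_{t_0}^{1/2}\,\R^2},
\]
where the last step uses $(16\pi)^{3/2}=64\pi^{3/2}$ and the definition \eqref{R} of $\R=\R(\partial M)$. Since $A_{t_0}^{1/2}=\sqrt{4\pi\RA^2}=2\sqrt\pi\,\RA$, the right‑hand side collapses to $J^2/(\RA\R^2)$, which is \eqref{mainineq}.

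I expect the genuine difficulty to be technical rather than conceptual: one must justify that $E_G(S_t)\to m_{ADM}$ and that the $t$‑integral in \eqref{R} converges — both of which rest on the asymptotic roundness of a smooth IMCF in an asymptotically flat end — and one must handle the conservation of $J$ carefully via the divergence theorem on the region between two flow surfaces, using $\mathcal{L}_\eta\bar g=0$ and $\mathcal{L}_\eta K=0$ (and noting that the monotonicity \eqref{evolE} itself uses the spherical topology of the $S_t$). The algebraic core — the two Cauchy–Schwarz estimates, the sign of the $q$‑dependent term in \eqref{barR}, and the constant bookkeeping — is routine once these inputs are secured.
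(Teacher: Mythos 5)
Your proposal is correct and follows essentially the same route as the paper: the pointwise and integral Cauchy--Schwarz estimates giving $\int_{S_t}\bar R\,dS\ge 2(8\pi J)^2/\int_{S_t}\eta\,dS$, the monotonicity \eqref{evolE}, and the identification of the resulting $t$-integral with $1/(A_{t_0}^{1/2}\R^2)$ via \eqref{R}; your constants check out. You in fact supply more detail than the paper at two points it leaves implicit --- the sign of the $q$-dependent term in case a) and the divergence-theorem argument for conservation of $J$ along the flow.
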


\begin{proof}

From \cite{Anglada2017} we know that one can include the rotational contribution to the energy in the derivate of the Geroch energy. The idea is to relate the angular momentum of each surface of the IMCF with the surface integral of the norm of the extrinsic curvature, and use this to obtain a lower bound to the surface integral of the scalar curvature.

Noting that from \eqref{K} $K_{ij} \eta^i \nu^j=s_i \eta^i$, we can improve the bound obtained in \cite{Anglada2017} in the following way:
\be\label{angmomineq}
\begin{split}
 J_t^2&=\frac{1}{(8\pi)^2} \left( \int_{S_t} s_i \eta^i dS\right)^2 \leq \frac{1}{(8\pi)^2} \left( \int_{S_t} |s_i\eta^i| dS\right)^2 \\
 & \leq \frac{1}{(8\pi)^2} \left( \int_{S_t} |s_i| |\eta^i| dS\right)^2  \leq \frac{1}{(8\pi)^2} \int_{S_t} |s_i|^2 dS  \int_{S_t} |\eta^i|^2 dS  \\
 &=\frac{1}{(8\pi)^2} \int_{S_t} s_is^i dS  \int_{S_t} \eta dS
\end{split}
\ee
where in the fourth step we have used the H\"older inequality. Then we have:
\be
\int_{S_t} s_i s^i dS \geq \frac{(8\pi)^2 J_t^2}{\int_{S_t}\eta dS}
\ee
thus assuming either conditions \eqref{a} or \eqref{b} for $K$, and using \eqref{barR} we obtain the desired bound:
\be
\label{RvsJ}
\int_{S_t} \bar R dS \geq 2\int_{S_t} s_i s^i dS \geq  2\frac{(8\pi)^2 J_t^2}{\int_{S_t}\eta dS}
\ee
Hence from \eqref{evolE} and \eqref{RvsJ}, and using that the angular momentum is conserved along the flow $J(S_t)=J(\partial M)=J$ we have:
\be\label{eqvolmm}
\frac{d}{dt}E_G \geq  \sqrt{4\pi} J^2 \frac{A_t^{1/2}}{\int_{S_t}\eta dS}
\ee

Then integrating \eqref{eqvolmm} from $\partial M$ to infinity, and using the relation between the Geroch energy and the ADM mass we obtain:
\be\label{evol4}
m_{ADM}\geq\lim_{t \to \infty} E_G(S_t) \geq E_G(S_0) + \sqrt{4\pi}J^2\int_0^\infty \frac{A_t^{1/2}}{\int_{S_t}  \eta dS}dt.
\ee 
Finally we write this expression in terms of $\R$ and the areal radii of $\partial M$
and obtain \eqref{mainineq}.
\end{proof}
\vspace{0.5cm}


In case we have a non-vacuum initial data, the inequality can be extended 
using similar techniques, provided that the matter field satisfies the dominant energy condition and that both the matter density and the matter current have compact support.
Note that in this case the angular momentum of a surface $S_t$ is
\be
 J(S_t)=\frac{1}{8\pi}\int_{S_t} K_{ij} \eta^i \nu^jdS=  J(\partial M) - \int_{V(S_t)} j_{i} \eta^i dv,
\ee
where $V(S_t)$ is the region enclosed between $\partial M$ and $S_t$. Thus the conservation of the angular momentum along the flow  is only satisfied when the surfaces $S_t$ are outside the compact support of $j_i\eta^i$. Then under the same conditions of theorem \eqref{theo1} we obtain the following extension for non-vacuum initial data.


\begin{theorem}\label{theorem2}
Let $(M,\partial M, \bar g, K; \mu, j^i)$ be an initial data satisfying the same conditions of theorem  \ref{theo1}.
Assume the matter fields satisfy the dominant energy condition and have compact support, and let $T$ such that for all $t\geq T$ the matter density and the matter current have compact support inside $S_t$, then:
\be\label{theo2}
m_{ADM}\geq m_{T} +  \frac{\RA}{2} + \frac{ J^2}{\RA(T) \R^2(T)}
\ee
where $J$ is the total angular momentum of the data, $\RA$ and $\RA(T)$ are the areal raddi of $\partial M$ and $S_T$ respectively, $\R(T)=\R(S_T)$ is defined  by \eqref{R}, and 
\be\label{mTbh}
m_{T}:=\frac{1}{16\pi}\int_{\RA}^{\RA(T)}d\xi\int_{S_\xi}\bar R dS 
\ee
where $\xi$ stands for the  areal radius coordinate.
\end{theorem}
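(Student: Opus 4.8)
The plan is to rerun the Geroch-energy monotonicity argument of Theorem~\ref{theo1}, but to cut the inverse mean curvature flow at the parameter $T$ past which the matter is enclosed and treat the inner piece $[t_0,T]$ and the outer piece $[T,\infty)$ separately. On the outer piece the enclosed angular momentum is frozen at the total value $J$, so the rotational chain \eqref{angmomineq}--\eqref{RvsJ} goes through unchanged; on the inner piece $J(S_t)$ is not constant, so there one abandons the rotational estimate and keeps only $\bar R\geq 0$ --- which still holds under either \eqref{a} or \eqref{b} by \eqref{barR} and the DEC --- while retaining the full $\int_{S_t}\bar R\,dS$ in \eqref{evolE}. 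That retained contribution is exactly what reassembles into the matter term $m_T$.

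I would first integrate \eqref{evolE} from $t_0$ to infinity, use $E_G(\partial M)=\RA/2$ (valid because $\partial M$ is minimal, so $H\equiv 0$ there) together with $\lim_{t\to\infty}E_G(S_t)\le m_{ADM}$, and split the range at $T$:
\be
m_{ADM}\ \geq\ \frac{\RA}{2}\;+\;\int_{t_0}^{T}\frac{A_t^{1/2}}{(16\pi)^{3/2}}\int_{S_t}\bar R\,dS\,dt\;+\;\int_{T}^{\infty}\frac{A_t^{1/2}}{(16\pi)^{3/2}}\int_{S_t}\bar R\,dS\,dt.
\ee
In the first integral I would change variables from the flow time to the areal radius $\xi=\RA(S_t)=\sqrt{A_t/4\pi}$: since $\partial_tA_t=A_t$ along the IMCF one has $A_t=4\pi\xi^2$ and $dt=(2/\xi)\,d\xi$, the numerical factors collapse to $\frac{A_t^{1/2}}{(16\pi)^{3/2}}\,dt=\frac{1}{16\pi}\,d\xi$, and the endpoints become $\RA$ and $\RA(T)$; thus this term is precisely $m_T$ of \eqref{mTbh}. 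In the second integral I would observe that for $t\ge T$ the region $V(S_t)$ contains the support of $j_i\eta^i$, so $J(S_t)=J$; then \eqref{RvsJ} and the same arithmetic that produced \eqref{eqvolmm} give $\frac{A_t^{1/2}}{(16\pi)^{3/2}}\int_{S_t}\bar R\,dS\ \geq\ \sqrt{4\pi}\,J^2\,\frac{A_t^{1/2}}{\int_{S_t}\eta\,dS}$ for all $t\ge T$.

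Finally I would compute $\int_T^\infty \frac{A_t^{1/2}}{\int_{S_t}\eta\,dS}\,dt$ from Definition~\eqref{R} evaluated at $t=T$, namely $\int_T^\infty \frac{A_{t'}^{1/2}}{\int_{S_{t'}}\eta\,dS}\,dt'=\frac{1}{A_T^{1/2}\R^2(T)}$, and insert $A_T^{1/2}=\sqrt{4\pi}\,\RA(T)$ to obtain $\frac{J^2}{\RA(T)\R^2(T)}$; summing the three contributions yields \eqref{theo2}. The delicate point is the bookkeeping at $T$: one must be sure the monotonicity formula is not wasted on the inner region --- the entire $\int_{S_t}\bar R\,dS$ is kept there and is genuinely non-negative by the case analysis of \eqref{barR} under \eqref{a} or \eqref{b}, which incidentally gives $m_T\ge 0$ and hence a genuine strengthening of \eqref{mainineq} --- and that the passage to the areal radius reproduces exactly the normalisation in \eqref{mTbh}; convergence of the outer integral, and hence well-definedness of $\R(T)$, is inherited from the smoothness and asymptotic flatness of the flow just as for $\R(\partial M)$.
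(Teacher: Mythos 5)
Your proof is correct and takes essentially the same route the paper intends: the paper only sketches Theorem~\ref{theorem2} as a corollary of the Theorem~\ref{theo1} argument, and your split of the flow at $T$ --- keeping the full $\int_{S_t}\bar R\,dS$ on $[t_0,T]$ and converting it to the areal-radius integral $m_T$, while applying the rotational bound \eqref{RvsJ} with the now-conserved $J(S_t)=J$ on $[T,\infty)$ --- is exactly the intended mechanism. The normalisations you verify ($\frac{A_t^{1/2}}{(16\pi)^{3/2}}\,dt=\frac{1}{16\pi}\,d\xi$ and the $\sqrt{4\pi}$ coefficient combining with $A_T^{1/2}=\sqrt{4\pi}\,\RA(T)$) all come out right.
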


\vspace{1cm}

\textbf{Remarks} 

\vspace{0.5cm}

Inequalities \eqref{mainineq} and \eqref{theo2} are global in nature as they involve the ADM mass, are linear in the ADM mass and quadratic in the  angular momentum as a result of the linear dependence of $dE_G/dt$ with the scalar curvature $\bar R$.
The first terms in \eqref{mainineq} and \eqref{theo2}, $\frac{\RA}{2}$ and $m_{T} + \frac{\RA}{2}$, are quasi-local measures of energy of the region enclosed by $S_T$, and the term $\frac{ J^2}{\RA(T) \R^2(T)}$ is a measure of the rotation energy. This relation between the total energy and the sum of local measures of energy and rotational energy is in accordance with the Newtonian limit (see \cite{Anglada2017}). From Newtonian theory one may write the total energy of a compact region as the sum of two terms, the first including the gravitational and internal energies, denoted as $E_0$, and the second, the rotational energy
$$E\approx E_0+\frac{J^2}{2I}.$$
where $I$ is the Newtonian moment of inertia of the region. Hence
we may interpret the product $\RA(S_T) \R^2(S_T)$ as an upper bound for the Newtonian moment of inertia associated to the region  enclosed by $S_T$.
Then an IMCF, starting from some initial surface in accordance with the symmetries of the initial data, provides a foliation by topological spheres adapted to the geometry of the initial data. This particular foliation, though arbitrary, gives us a privileged coordinate system in the sense that it enables us to recover the usual relations between energy and measures of size of the Newtonian limit.

\vspace{0.5cm}
The notion of size we use, $\R$, albeit apparently artificial at first sight, comes 
from the particular method we use to relate the angular momentum with the ADM
mass, and it gives a good measure of how different the IMCF is from a spherical
one. The behavior of the norm of the Killing vector along the IMCF appears to be
a necessary ingredient if we want to consider some measure of rotational energy
as a function of the object's size. Moreover the measures based on the norm of
the Killing vector have been found to give an appropriate description of size of a
region when describing both regular objects and black holes with angular
momentum \cite{Anglada2017}, \cite{Reiris:2014tva}, \cite{Reiris:2013jaa},
\cite{Dain:2014}.

\vspace{0.5cm}
We do not know if it is possible to find a relation between $\R$ with the usual measures of size; without assuming some condition on the IMCF, the measure is strongly dependent on the flow. This is clearly one of the main open questions we want to address next.
Assuming particular properties for the IMCF we can write \eqref{mainineq} in terms of the usual measures of size.
For a spherical IMCF: 
\be
m_{ADM}\geq \frac{\RA}{2} + \frac{ J^2}{\RA^3}.
\ee
For a convex IMCF: 
\be
m_{ADM}\geq \frac{\RA}{2} + \frac{2}{5}\frac{ J^2}{\RA \RC^2}.
\ee
For a IMCF satisfying  $h_{ij}\geq -\frac{H}{2} g_{ij}$ (i.e. not far from being convex): 
\be
m_{ADM}\geq \frac{\RA}{2} + \frac{2}{7}\frac{ J^2}{\RA \RC^2}. 
\ee

\vspace{0.5cm}

Inequality \eqref{mainineq} can be also written in the form of \eqref{penJ}. First note that we can drop the positive term involving the angular momentum in \eqref{mainineq} and get $m_{ADM} \geq \frac{\RA}{2}$, then multiplying \eqref{mainineq} by $\RA/2$ and using this we have
\be\label{mainineqV2}
m_{ADM}^2 \geq \frac{A}{16 \pi} + \frac{ J^2}{2 \R^2},
\ee
note that this inequality is weaker than \eqref{mainineq}. In the case we have a spherical IMCF this implies the validity of a weaker version of \eqref{penJ}:
\be
m_{ADM}^2 \geq \frac{A}{16 \pi} + \frac{ 2\pi J^2}{ A} .
\ee

\vspace{0.5cm}

Inequality \eqref{theo2} has also meaning in the case we do not have a black hole but a regular object, like a neutron star. That is to say we can consider a regular initial data without trapped surfaces. This problem has been studied in \cite{Anglada2017}, but with the arguments presented in this work we can improve the previous result:
\be
m_{ADM}\geq m_{T} + \frac{ J^2}{\RA(T) \R^2(T)}
\ee
where, in this case, we start the flow from a point in the axis of symmetry inside the object, and the quasi-local mass is
\be \label{mT}
m_{T}:=\frac{1}{16\pi}\int_{0}^{\RA(T)}d\xi\int_{S_\xi}\bar R dS .
\ee

\vspace{0.5cm}

The condition of $M$ being an exterior region is not essential for the proof.
Instead, we can assume the weaker condition that $M$ has no minimal surfaces.
Inequality \eqref{mainineq} remains valid for initial data sets with no minimal
surfaces other than $\partial M$.
In this case the initial data could have a trapped region and $\partial M$ being
no longer an apparent horizon. Moreover, inequality \eqref{theo2} is also valid
with this weaker condition, and in this case not only the minimal surface could
be inside the trapped region but also the the matter fields.
\vspace{0.5cm}

Inequality \eqref{mainineq} is also valid for non-vacuum initial data, provided that the matter fields satisfy the DEC and that $j_i\eta^i=0$ everywhere in $M$\footnote{I thank an anonimous referee for making this observation.}. Assuming the DEC we assure that the Geroch energy remains monotonic for non-vacuum initial data. Condition $j_i\eta^i=0$ assures that the angular momentum is preserved along the flow $J(S_t)=J$, and that there is no contribution to $J$ coming from the matter fields  $J=J(\partial M)$. 
\vspace{0.5cm}

In \eqref{theo2} we are assuming that the matter fields satisfy the DEC, in fact we only need the weak energy condition for the proof, but we consider the DEC is the more appropriate condition for studying this type of inequalities with matter fields.
In \ref{theorem2} we are interested in considering the situation in which the black hole is surrounded by a compact rotating object. The assumption of $\mu$ having compact support is only needed to have a good definition of the compact object we are interested in, and also its quasi-local mass $m_T$. One can prove the same relation between total mass and total angular momentum without it.
The condition of $j^i$ having compact support can also be relaxed, we can assume a weaker condition: that the current density in the direction of the Killing vector, $j_i\eta^i$, have compact support. Note that this last weaker condition is a necessary condition to obtain our result, the conservation of the angular momentum along the flow is only satisfied when the surfaces $S_t$ are outside the compact support of $j_i\eta^i$.
\vspace{0.5cm}

If the obtained inequalities have a rigidity statement, the equality must be achieved only for the t=0 slice of a Kerr black hole. Unfortunately it is not easy to test the result in a slice of a Kerr black hole. To test the theorems one needs to calculate a smooth solution of the IMCF in that slice, and we do not have an analytical expression for it, at least not yet. One possibility is to continue looking for an analytical solution, but the complexity of the equation of IMCF in this slice do not offer much hope. The other possibility is to look for a numerical solution of the IMCF to test the rigidity of the theorem. The numerical approach probably will give us the easiest scenario of study a rigidity statement and thus is one of the points we want to study next.

\vspace{0.5cm}

The assumption of existence of a smooth solution of the IMCF seems somewhat restrictive. In this work, however, we are more concerned with presenting a clear relation between physical parameters than with discussing the subtleties
involved in the IMCF. From the Huisken and Ilmanem work \cite{Huisken2001} we know that there exists a weak solution of the flow for the initial data we are considering. Some parts of our derivation do not require smoothness, and can be repeated using the weak level set version of the flow defined by the mentioned authors. Other parts of our proof strongly depend on having a smooth foliation of the initial data. In this sense, a posible future work would be to adapt the particular method we use to address the problem for weak solutions of the IMCF. The principal obstacle to do this lies in the particular measure of size we define. This measure is based on having a smooth foliation by topological spheres, and depends on the behavior of the flow that defines such foliation. In our proof we use a particular form of the metric of the initial data. This form depends on having a smooth solution of the flow. Thus, all steps in the proof relying on the explicit form of the metric would need revision.

\section*{Acknowledgments}
I thank Maria E. Gabach-Clement and Omar E. Ortiz for enlightening discussions and for their encouraging support.
I would like also to thank Natacha Altamirano for providing me some notes of her authorship about the problem.
This work was supported by grants from CONICET and
SECyT, UNC.

\end{document}